\documentclass[12pt]{article}
\usepackage{makeidx}
\usepackage{graphicx}
\usepackage{amsmath}
\usepackage{amsfonts}
\usepackage{hyperref}  
\usepackage[numbers]{natbib} 
\usepackage{enumitem} 
\usepackage{pdflscape}
\usepackage{amssymb}
\usepackage{amsthm}
\usepackage{float} 
 \usepackage{fancyhdr}  

 \usepackage[nameinlink,noabbrev]{cleveref}

  \theoremstyle{plain}

  \theoremstyle{definition}

  \newtheorem{proposition}{Proposition}

  \crefname{equation}{}{}
  
  \newfloat{textbox}{htbp}{lop}
  \floatname{textbox}{Text Box}

 \newcommand{\slot}{\resizebox{\width}{0.5\height}{$\sqcup$}}
   
 \newcommand{\Cen}[1]{Z(#1)}

\renewcommand{\Vec}[1]{\, \overrightarrow{#1}}

\title{Stickel-type key exchange with hidden subspaces}

\author{Fintan Costello \\ School of Computer Science and Informatics,\\ University College Dublin\\
	and \\
	Paul Watts \\ Department of Physics,\\National University of
	Ireland Maynooth\\ }

\date{}

\begin{document}
  
\maketitle \thispagestyle{empty}

  \begin{abstract}
  	We give a witness-finding cryptanalysis of Stickel-type key exchange schemes, which involve two-sided multiplication of $n \times n$ matrices over $\mathbb{F}_p$, where these matrices are drawn from public  subspaces with a particular commuting structure.   This analysis covers Stickel's original proposal \cite{stickel2005new}, Shpilrain's polynomial extension of that scheme \cite{shpilrain2005new},  Nager's algebraic extension of that scheme \cite{Nager2024}, and more generally all Stickel-type approaches using public subspaces over matrix algebra in finite fields: all such schemes can be broken in polynomial time.  We also describe a new key establishment scheme using two-sided matrix multiplication in which the commuting subspaces used to form the key are hidden via conjugation by private terms, blocking this specific public-subspace analysis; the witness-finding problem in this new scheme has a direct reduction from a standard NP-hard problem (Edmonds' problem).    
  \end{abstract}

Traditional approaches to cryptographic key exchange are vulnerable to quantum attacks; given this, interest has grown in alternative key exchange mechanisms.  One family of mechanism involves two-sided multiplication of  elements from a public subgroup,  subspace or ring over standard matrix algebra, originally proposed by Stickel \cite{stickel2005new} in terms of  powers of invertible $n \times n$ matrices over a finite field $\mathbb{F}_p$.  A range of key exchange schemes have been given using the general Stickel/two-sided-multiplication-by-public-subspace-elements approach;  we describe various forms of this approach and give a tensor-based cryptanalysis that applies to all such schemes, at least over classical matrix algebra.  We also describe a new hidden-subspace version of this approach, which removes the public linear structure used in this analysis.   We begin with  notation.

\section{Notation}

For some fixed prime $p$ we take $\mathcal{M}_k =\mathcal{M}_k(\mathbb{F}_p)$ to be the set of all $k \times k$ matrices with entries in  $\mathbb{F}_p$, and for some fixed dimension parameter $n$ take $\mathcal{M} =\mathcal{M}_n$; we assume that all matrices are members of $\mathcal{M}$ unless stated otherwise.  For any set of matrices $\mathcal{A}$ we take  $\mathcal{A}^\times$ to be the set of invertible matrices in $\mathcal{A}$. 
For a matrix $x$ we take $Alg(x)$ to be the polynomial algebra in $x$; for a matrix $x$ or set of matrices $\mathcal{X}$ we take
\[ \Cen{x} = \{ y \mid xy=yx\}, \quad \Cen{\mathcal{X}} = \{ y \mid  xy=yx \text{ for all } x \in \mathcal{X} \} \]
to be the set of matrices that commute with $x$ or with all members of $\mathcal{X}$.  We also take 
\[ \mathcal{Z}_k =\Cen{\mathcal{M}_k } = \{ \lambda I \mid \lambda \in \mathbb{F}_p \}  \]
 to represent the set of scalar matrices in $\mathcal{M}_k$ (the center of $\mathcal{M}_k$), and take $\mathcal{Z}  = \mathcal{Z}_n$.

For   matrices $a,b,W$ we define a two-sided action operator  $\slot$ such that 
\[ (a \slot b)(W) = aWb \] 
For  sets of matrices $\mathcal{A},\mathcal{B}$ we define  
\[ \mathcal{A} \slot \mathcal{B}  =  \{ a \slot b \mid a \in \mathcal{A}, b \in\mathcal{B} \}\] 
to be  the set of all such operators  generated by members of those sets; for matrix subspaces  $\mathcal{A},\mathcal{B}$ we also define
\[ \langle \mathcal{A} \slot \mathcal{B} \rangle =  \left\{\sum_{i=1}^{m}  \lambda_i u_i \mid u_i \in \mathcal{A} \slot \mathcal{B}, m > 0, \lambda_i \in \mathbb{F}_p \right\} \]
to be the set of all linear combinations of operators $ u \in \mathcal{A} \slot \mathcal{B} $. We take
\[  \langle \mathcal{A} W \mathcal{B} \rangle = \{ u(W) \mid u \in \langle \mathcal{A} \slot \mathcal{B} \rangle \}\] 
to be the set of all matrices produced by applying  these operators to $W$.

Using the vectorisation operator $\Vec{X}$ and the  Kronecker or tensor product $\otimes$ we have
\begin{equation}\label{eq:kronecker} (b^\top \otimes a) \Vec{W} =  \Vec{aWb} \end{equation}
(a standard identity) and so
\[  \Vec{(a \slot b)(W)} =  (b^\top \otimes a) \Vec{W} \]
and $(a \slot b)$ and  $(b^\top \otimes a)$ are equivalent modulo reshaping of their inputs and outputs; since linear combinations are equivalent modulo matching reshaping of their terms, every operator $u \in \langle \mathcal{A} \slot \mathcal{B} \rangle$ has an equivalent matrix $F_u$ (a sum of tensor products) such that
\[  \Vec{u(W)} = F_u \Vec{W} \]

For matrix subspaces and operators such that
\[  u = (a \slot b) \in \mathcal{A} \slot \mathcal{B}, \quad \quad v = (c \slot d)  \in \mathcal{C} \slot \mathcal{D}, \quad \quad \mathcal{C} \subseteq \Cen{\mathcal{A}}, \quad \mathcal{D} \subseteq \Cen{\mathcal{B}}  \]
we have
\[  u(v(W)) = a (c W d) b = c a W b d = c (a W b) d= v(u(W)) \]
and so all such operators commute; since linear combinations of commuting operators will also commute, for such commuting subspaces all operators
\[ u \in \langle \mathcal{A} \slot \mathcal{B} \rangle, \quad \quad v   \in \langle \mathcal{C} \slot \mathcal{D} \rangle \]
will also commute.

\section{Stickel-type key exchange schemes}

Stickel's original scheme \cite{stickel2005new} begins with an initialisation stage, where Alice and Bob agree public non-commuting and invertible  matrices $A,B$ and $W$ (the generators for the scheme). Given these generators, to construct a shared key $K$ Alice and Bob then exchange matrices as follows.  Alice picks  random integers $f$ and $g$, sets $a=A^{f}$ and $b = B^{g}$ and publishes the matrix
\[ U=  a W b\] 
while Bob  similarly picks  random integers $\ell$ and $m$, sets $c=A^{\ell}$ and $d = B^{m}$, and publishes  
\[ V =c  W d \]
Given $V$ Alice takes 
\[ a  V b = a c W d b = a c W b d   \]
and given $U$ Bob takes
\[ c  U d = c a W b d=  a c W b d  \]
(since powers of a given matrix commute) and Alice and Bob have a shared key $K =  a c W b d $.   In our notation this corresponds to defining  $\mathcal{A},\mathcal{C}$ to be the sets of all powers of $A$ and defining $\mathcal{B},\mathcal{D}$ to be the sets of all powers of $B$; then Alice and Bob pick
\[ u \in \mathcal{A} \slot \mathcal{B}, \quad v \in \mathcal{C} \slot \mathcal{D}  \]
and publish 
\[ U=u(W),  \quad  V=v(W) \]
and since these operator sets commute they have
\[ u(V)=  u(v(W))= v(u(W))  = v(U) =K \]

Note that the structure of this scheme means that the shared key $K$ is a function only of the public transcript $W,A,B,U,V$ only: there may be multiple  private operators
\[ u, \mu \in \mathcal{A} \slot \mathcal{B}, \quad v, \nu \in \mathcal{C} \slot \mathcal{D}  \]
such that 
\[ u(W)=U, \quad \mu(W)=U,   \quad  v(W)= V, \quad \nu(W) = V \]
(that is, multiple private choices for powers of $A$ and $B$ that produce the same public terms $U$ and $V$: multiple `witnesses' for the transcript) but since these operators all commute by construction, for each possible pairing of private choices $u(W)=\mu(W)=U$ and $v(W)=\nu(W)=V$ we get
\[   u(V) = \mu(V) = v(U) = \nu(U)  =K \] 
and  the shared key is  a function of the public transcript $W,A,B,U,V$ only.

\subsection{Shpilrain's extension}
Stickel's scheme was cryptanalysed by Shpilrain \cite{shpilrain2008cryptanalysis}, who noticed that to get the shared key, an eavesdropper Eve needs only to find witness matrices $a_e$ and $b_e$ such that
\[	a_e \in \Cen{A}, \quad  b_e \in \Cen{B}, \quad  a_e W b_e =U  \]
(and these witnesses need not be powers of $A$ and $B$).  Given such witnesses  Eve can then take
\[	a_e V b_e = a_e c  W d b_e =  c a_e   W  b_e  d = c U  d \]
giving the shared key.    The advantage in this attack is that any matrices $  a_e \in \Cen{A},  b_e \in \Cen{B}$ that reproduce $U$ will give the shared key; and if $U$ is invertible then all that is needed is to find $a_e \in \Cen{A}$ and invertible $x_e \in \Cen{B}$ satisfying the linear equation
\[    a_e W - U x_e = 0 \]
and taking $b_e= x_e^{-1}$ we have $a_e W b_e = U$ as required.

To avoid this linear algebra attack, Shpilrain  proposed a version using singular rather than invertible  matrices $A,B$ and using polynomials $f,g,\ell,m$ of $A$ and $B$ rather than powers \cite{shpilrain2005new}.  Letting  $Alg(X)$ be the polynomial algebra generated by $X$, in Shpilrain's scheme Alice has private matrices $a \in Alg(A),  b \in Alg(B)$ and Bob private matrices  $c \in Alg(A), d \in Alg(B)$, and Alice and Bob publish
\[ U= a W b, \quad  V =c  W d \]
Given $V$ Alice takes 
\[a  V b =a c W d b= a c W b d  \]
(since polynomials of a given element commute) and given $U$ Bob takes
\[c  U d =c a W b d= a c W b d   \]
getting a shared key   $K =  a c W b d $  as before.     In our notation Shpilrain's scheme is produced by defining  $\mathcal{A},\mathcal{C}$ both equal to $Alg(A)$ and  $\mathcal{B},\mathcal{D}$ both equal to $Alg(B)$, picking 
\[ u \in \mathcal{A} \slot \mathcal{B}, \quad v \in \mathcal{C} \slot \mathcal{D}  \] 
and running the scheme just as before; and since the operator structure of the scheme is unchanged, the shared key $K$ is again a function of the public transcript.

\subsection{Nager's extension}
We can also consider Nager's \cite{Nager2024} recent algebraic extension of Stickel's approach.  To construct a shared key in this scheme Alice picks  random integers $f,g,h,i$ and sets $a_1 =A^{f}, b_1 = B^{g}, a_2= A^{h}$ and $b_2=B^{i}$ and publishes the matrix
\[ U= a_1 W b_1 +  a_2 W b_2 \] 
while Bob  similarly picks  random integers $j,k,\ell,m$, sets $c_1=A^{j},d_1 = B^{k},c_2=A^{\ell},d_2=B^m$, and publishes  
\[ V= c_1 W d_1 +  c_2 W d_2 \] 
Given $V$ Alice takes 
\[a_1  V b_1 + a_2 V b_2 = a_1 c_1 W b_1 d_1  + a_1 c_2 W b_1 d_2 + a_2 c_1 W  b_2 d_1+  a_2 c_2 W b_2 d_2 \]
(the $a$ and $b$ terms commuting) and given $U$ Bob takes
\[c_1  U d_1 + c_2 U d_2 = a_1 c_1 W b_1 d_1  + a_1 c_2 W b_1 d_2 + a_2 c_1 W  b_2 d_1+  a_2 c_2 W b_2 d_2 \]
and Alice and Bob again have a shared key.  In our notation this scheme is produced by defining  $\mathcal{A},\mathcal{C}$ equal to $Alg(A)$, $\mathcal{B},\mathcal{D}$ equal to $Alg(B)$,  picking operators from  $\langle \mathcal{A} \slot \mathcal{B} \rangle$ and $\langle \mathcal{C} \slot \mathcal{D} \rangle$  and running the scheme just as before (with, again, the shared key $K$ being a function of the public transcript).

There are also a range of key exchange approaches which use this Stickel or two-sided multiplication approach over  tropical semirings  rather than finite fields.  Grigoriev and Shpilrain \cite{grigoriev2014tropical} gave the first such scheme, analogous to the Shpilrain polynomial scheme described above but using matrices over the tropical max-plus algebra; Kotov and Ushakov  \cite{kotov2018analysis} gave a cryptanalysis of this scheme.  A variety of other tropical Stickel-type schemes have been proposed,  using matrices of various forms to construct $a,c,b,d$ \cite[for  reviews of the tropical approach to cryptography, see][]{isaac2021closer,ahmed2023review}.  Since these differ from the above schemes in the choice of underlying representation rather than the key-exchange structure,  we do not analyse them in detail.

\subsection{A general form: commuting subspaces}  

We now describe a scheme that generalizes these different approaches. In the initialisation stage, Alice and Bob agree a public matrix $W$ and public matrix subspaces $\mathcal{A},\mathcal{B},\mathcal{C},\mathcal{D}$ such that  $\mathcal{C} \subseteq \Cen{\mathcal{A}},\mathcal{D} \subseteq  \Cen{\mathcal{B}}$  (the generators for the scheme). To form a shared key,  Alice picks some operator $u \in \langle \mathcal{A} \slot \mathcal{B} \rangle$ and publishes $ U = u (W)$ while Bob picks some  operator $v \in \langle \mathcal{C} \slot \mathcal{D} \rangle$ and publishes $ V =v (W)$
and since these operators commute they get their shared key by 
\[  u(V) =  u(v (W)) = v(u (W))  = v(U) = K  \]
This scheme includes the earlier schemes as special cases; and just as in those schemes, the shared key in this scheme is a function only of the public transcript $W,\mathcal{A},\mathcal{B},\mathcal{C},\mathcal{D},U,V$.

\subsection{Cryptanalysis of Stickel-type schemes} \label{sec:cryptanalysis}

Here we give a `witness-finding' cryptanalysis for the general commuting subspace scheme described above.  Alice and Bob's operators $u$ and $v$ in this scheme can be expressed as
\[ u(X) = \sum_{i=1}^m a_i X b_i, \quad  v(X) = \sum_{j=1}^{m'} c_j X d_j,\quad  \quad a_i \in \mathcal{A},b_i \in \mathcal{B},c_j \in \mathcal{C},d_j \in \mathcal{D} \] 
Using the Kronecker identity \cref{eq:kronecker}  these operators have corresponding  operator matrices
\[ F_u =  \sum_{i=1}^m b_i^\top \otimes a_i, \quad G_v =  \sum_{j=1}^{m'} d_j^\top \otimes c_j, \quad  \quad a_i \in \mathcal{A},b_i \in \mathcal{B},c_j \in \mathcal{C},d_j \in \mathcal{D} \]
such that
\[  u(X) = Y \longleftrightarrow F_u \Vec{X} = \Vec{Y}, \quad \quad v(X) = Y \longleftrightarrow G_v \Vec{X} = \Vec{Y} \]
and so
\[ F_u \Vec{W} = \Vec{U}, \quad G_v \Vec{W} = \Vec{V} \]
Since all such operators $u,v$ commute (by construction) the corresponding matrices necessarily also commute, giving 
\[  F_u \Vec{V} =   F_u G_v \Vec{W} =  G_v  F_u \Vec{W} =G_v \Vec{U} = \Vec{K} \]
 Letting
\[  A_1,\ldots, A_{\dim \mathcal{A}} \in \mathcal{A}, \quad B_1,\ldots, B_{\dim \mathcal{B}} \in \mathcal{B}  \] 
be basis matrices for the public subspaces $\mathcal{A}$ and $\mathcal{B}$ then every tensor product $b_i^\top \otimes a_i$ is a member of the linear subspace spanned by the set  of basis tensors
\[   \{  L_{ij} =  B_v^\top \otimes A_i \mid 1 \leq i \leq \dim \mathcal{A},\ 1 \leq j \leq \dim \mathcal{B}  \} \]
and so   the set of all possible admissible operator matrices $F$ is the set of all matrices
\[ F(\lambda) =  \sum_{i=1}^{\dim \mathcal{A}}\sum_{j=1}^{\dim \mathcal{B}} \lambda_{ij} L_{ij} \]
for values of the coefficient matrix $\lambda$ of dimensions $(\dim \mathcal{A}) \times (\dim \mathcal{B})$;
and every such matrix $F(\lambda)$ will necessarily commute with the matrix $G_v$ associated with Bob's operator $v$.   Eve can solve for $\lambda$ in the linear matrix equation
\[ F(\lambda) \Vec{W} =  \left(\sum_{i=1}^{\dim \mathcal{A}}\sum_{j=1}^{\dim \mathcal{B}} \lambda_{ij} L_{ij}  \right)  \Vec{W} =  \Vec{U}   \]
 in polynomial time via standard matrix algebra,  giving  $F(\lambda)\Vec{V}=\Vec{K}$ and recovering the shared  key.

Note that this  tensor-based analysis is a  generalisation of a  similar recent result by Otero Sanchez \cite{sanchez2025keyexchangeprotocolbased}, which gives a  cryptanalysis of key exchange schemes using two-sided actions on a semiring (focusing in particular on tropical semirings and digital semirings).

\section{Hidden commuting subspaces}

The central problem with the above schemes is that the spaces used to construct Alice and Bob's commuting operators $u$ and $v$ are public, and so these operators can be solved for linearly in terms of public tensor bases. Here we describe a two-sided multiplication scheme  where these commuting spaces are not publicly available in linear form, but are instead hidden via conjugation (by `transport').    Where the above schemes had two stages (initialization and key exchange) this hidden subspace approach has three: initialization (where Alice and Bob agree a set of public generators), subspace construction (where they exchange information allowing them to form appropriately commuting hidden subspaces $\mathcal{A},\mathcal{B},\mathcal{C},\mathcal{D}$), and key exchange (where they pick operators  $u \in \langle \mathcal{A} \slot \mathcal{B} \rangle$ and  $v \in \langle \mathcal{C} \slot \mathcal{D} \rangle$ respectively, publish $U=u(W),V=v(W)$, and get $K=u(V)=v(U)$ as before).  

For a given  $x$ and an invertible matrix $r$, we say that the conjugation $r x r^{-1}$ is the operation of transporting the core matrix $x$ by the transporter $r$;  any matrix we refer to as a `transporter' is implicitly invertible (and we typically take this as read).  In this scheme we use the fact that for a core subspace $\mathcal{X}$ and a transporter subspace $\mathcal{R}$, if Bob publishes a matrix
\[  \tilde{A} = r_b x_b r_b^{-1} \textit{ for } r_b \in \Cen{\mathcal{R}}^\times, x_b \in \mathcal{X}\]
and Alice publishes a matrix 
\[  \tilde{C} = r_a x_a r_a^{-1} \textit{ for } r_a \in \mathcal{R}^\times, x_a \in \Cen{\mathcal{X}}\]
then taking $r=r_a r_b=r_b r_a$ (since these transporters necessarily commute) Alice and Bob can form private matrices
\[  A = r_a \tilde{A} r_a^{-1} = r x_b r^{-1}, \quad \quad  C = r_b \tilde{C} r_b^{-1} = r x_a r^{-1} \]
and since $x_a$ and $x_b$ commute we have
\[ AC =   r x_b x_a r^{-1} =  r  x_a x_b r^{-1} = CA \]
and these two private matrices commute; and thus the private subspaces $\mathcal{A} = Alg(A),\mathcal{C}=Alg(C)$ also commute.

Given this, our proposed scheme runs as follows:

\paragraph{Initialisation:}
Alice and Bob agree a public invertible matrix $W$, two public core subspaces $\mathcal{X},\mathcal{Y}$ with  centralizers $\Cen{\mathcal{X}},\Cen{\mathcal{Y}}$, and two transporter subspaces $\mathcal{R},\mathcal{S}$ with centralizers $\Cen{\mathcal{R}},\Cen{\mathcal{S}}$ (where the transporter subspaces and their centralizers all contain invertible matrices).  These are the generators for the scheme. 

\paragraph{Subspace construction:} 
 Given these generators Alice constructs two  matrices $ \tilde{C},\tilde{D}$   such that
\begin{equation*}
	\tilde{C} = r_a x_a r_a^{-1},\tilde{D} = s_a y_a s_a^{-1}, \,\,  r_a \in \mathcal{R}^\times, x_a \in \Cen{\mathcal{X}}, s_a \in \Cen{\mathcal{S}}^\times, y_a \in \mathcal{Y}
\end{equation*}
for randomly chosen private transporters  $r_a,s_a$ and private core matrices $x_a,y_a$; she publishes $ \tilde{C}, \tilde{D}$  as her public keys, and  keeps $r_a,s_a$ as her private keys.  Similarly Bob  constructs two matrices $ \tilde{A},\tilde{B}$  such that
\begin{equation*} 
	\tilde{A} = r_b x_b r_b^{-1},  \tilde{B} = s_b y_b s_b^{-1}, \, \,  r_b \in \Cen{\mathcal{R}}^\times, x_b \in \mathcal{X}, s_b \in \mathcal{S}^\times, y_b \in  \Cen{\mathcal{Y}}
\end{equation*}
for  randomly chosen private transporters  $r_b,s_b$ and private core matrices $x_b,y_b$, publishes $ \tilde{A}, \tilde{B}$  as his public keys, and  keeps $r_b,s_b$ as his private keys.  

\paragraph{Key exchange:} 
 Alice now forms  private subspaces
\begin{equation*}
	 \mathcal{A} = r_a\, Alg(\tilde{A})\, r_a^{-1}, \quad  \quad \mathcal{B} =  s_a \, Alg(\tilde{B}) \, s_a^{-1} 
\end{equation*}
and  Bob forms  private subspaces
\begin{equation*} 
  \mathcal{C} = r_b\, Alg(\tilde{C})\, r_b^{-1}, \quad \quad  \mathcal{D} = s_b\,  Alg(\tilde{D}) \, s_b^{-1}
\end{equation*}
and given the commuting structure of the core and transporter subspaces, we have  $\mathcal{C} \subseteq \Cen{\mathcal{A}},\mathcal{D} \subseteq  \Cen{\mathcal{B}}$ as above.  

Now a general Stickel-type key exchange runs as before, but using these private commuting subspaces: Alice picks some operator $u \in \langle \mathcal{A} \slot \mathcal{B} \rangle$, Bob picks some operator $v \in \langle \mathcal{C} \slot \mathcal{D} \rangle$, they publish $ U = u (W),  V =v (W)$,
 Alice takes $K=u(V)=   u(v (W))$ and Bob takes $v(U) = v(u (W))  =  K$, and they have a shared key.

\subsection{Correctness and transcript dependence}

Assuming fixed generators  we represent a given key exchange transcript in this scheme as
\[
\tau = (\tilde{A},\tilde{B},\tilde{C},\tilde{D},U,V)
\]
For fixed generators,  Alice and Bob have some constrained set of  choices for their core and transporter matrices (those choices that are consistent with the requirements of the scheme);  for fixed $\tau$ we take
$\Omega_A(\tau) $ to be the set of honest choices $(x_a,y_a,r_a,s_a, u)$ for Alice for which
\[ r_a x_a r_a^{-1} = \tilde{C}, \quad s_a y_a s_a^{-1} = \tilde{D},  \quad 
 u(W)=U, \quad u \in \langle r_a Alg(\tilde{A}) r_a^{-1} \,  \slot\, s_a Alg(\tilde{B}) s_a^{-1} \rangle    \]
 also hold  (so $\Omega_A(\tau) $ is the set of all honest choices that also reproduce Alice's part of the transcript $\tau$).  We similarly take 
 $\Omega_B(\tau) $ to be the set of all honest choices $(x_b,y_b,r_b,s_b,v)$ for Bob for which
 \[ r_b x_b r_b^{-1} = \tilde{A}, \quad s_b y_b s_b^{-1} = \tilde{B},  \quad 
 v(W)=V, \quad v \in \langle r_b Alg(\tilde{C}) r_b^{-1}\,  \slot\, s_b Alg(\tilde{D}) s_b^{-1} \rangle    \]
 also hold (so $\Omega_B(\tau) $ is the set of all honest choices that also reproduce Bob's part of $\tau$).  
Using this notation we have 

\begin{proposition}\label{prop:transcript_key}
	For a fixed transcript $\tau$ generated in this hidden subspace scheme, all witness tuples in $\Omega_A(\tau) \times \Omega_B(\tau)$ yield the same shared key: the shared key is determined by $\tau$.
\end{proposition}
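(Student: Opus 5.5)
Take any $(x_a,y_a,r_a,s_a,u)\in\Omega_A(\tau)$ and $(x_b,y_b,r_b,s_b,v)\in\Omega_B(\tau)$. The plan is to show that every such pair yields an Alice key $u(V)$ and a Bob key $v(U)$ that coincide. Then any two witness pairs $(u,v)$ and $(\mu,\nu)$ satisfy
\[ u(V)=v(U)=\nu(U)=\mu(V), \]
using first the pair $(u,v)$, then the fact that $v(U)$ and $\nu(U)$ both equal the common value obtained with the mixed pair $(u,\nu)$, and so on. The argument is the same chaining as in the public-subspace case, and everything reduces to showing $u(v(W))=v(u(W))$ for an arbitrary mixed choice of witnesses. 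Once that holds, $u(V)=u(v(W))=v(u(W))=v(U)$, and since $U,V$ are fixed by $\tau$ the key depends on $\tau$ alone.

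The commutation follows from the operator commutation fact in the notation section. If $\mathcal{C}\subseteq\Cen{\mathcal{A}}$ and $\mathcal{D}\subseteq\Cen{\mathcal{B}}$, then every $u\in\langle\mathcal{A}\slot\mathcal{B}\rangle$ commutes with every $v\in\langle\mathcal{C}\slot\mathcal{D}\rangle$. Here
\[ \mathcal{A}=r_a Alg(\tilde A)r_a^{-1},\quad \mathcal{B}=s_a Alg(\tilde B)s_a^{-1},\quad \mathcal{C}=r_b Alg(\tilde C)r_b^{-1},\quad \mathcal{D}=s_b Alg(\tilde D)s_b^{-1}, \]
where the transporters are taken from the two (possibly mismatched) witness tuples. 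It therefore suffices to show that $A:=r_a\tilde A r_a^{-1}$ commutes with $C:=r_b\tilde C r_b^{-1}$, and that $B:=s_a\tilde B s_a^{-1}$ commutes with $D:=s_b\tilde D s_b^{-1}$. Commutation of generators extends to the polynomial algebras, because conjugation by a fixed matrix is an algebra homomorphism.

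For $A$ and $C$, substitute Bob's witness $\tilde A=r_bx_br_b^{-1}$ and Alice's witness $\tilde C=r_ax_ar_a^{-1}$. These identities hold for any tuples in $\Omega_B(\tau)$ and $\Omega_A(\tau)$ by definition. Because $r_a\in\mathcal{R}^\times$ and $r_b\in\Cen{\mathcal{R}}^\times$, we have $r_ar_b=r_br_a=:r$. This gives
\[ A=r x_b r^{-1}, \qquad C=r x_a r^{-1}, \]
and $x_a\in\Cen{\mathcal{X}}$, $x_b\in\mathcal{X}$ gives $AC=CA$. The same computation with $s_a\in\Cen{\mathcal{S}}^\times$, $s_b\in\mathcal{S}^\times$, $y_a\in\mathcal{Y}$ and $y_b\in\Cen{\mathcal{Y}}$ gives $BD=DB$.

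The main point needing care is that the witnesses are mixed. Alice's tuple and Bob's tuple are arbitrary members of $\Omega_A(\tau)$ and $\Omega_B(\tau)$, not the ones that actually produced $\tau$. The argument still goes through for two reasons. First, the subspaces $\mathcal{A},\mathcal{B}$ depend only on Alice's transporters and the public $\tilde A,\tilde B$. Second, the conjugation identities used above are exactly the defining constraints of $\Omega_A$ and $\Omega_B$. So no honesty of the original run is needed beyond membership of the witnesses in the stated sets.

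The operator $u$ ranges over $\langle\mathcal{A}\slot\mathcal{B}\rangle$ built from Alice's own $r_a,s_a$, and likewise $v$ over $\langle\mathcal{C}\slot\mathcal{D}\rangle$ built from Bob's $r_b,s_b$, so the commutation lemma applies directly. Combining this with $u(W)=U$ and $v(W)=V$ completes the argument.
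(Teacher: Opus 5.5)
Your proposal is correct and follows the paper's own argument: show that the operators from any mixed pair of witness tuples in $\Omega_A(\tau)\times\Omega_B(\tau)$ commute, then chain through a fixed operator to conclude that every $u(V)$ and every $v(U)$ equal one common value $K$. Your explicit check that $r_a\tilde{A}r_a^{-1}$ commutes with $r_b\tilde{C}r_b^{-1}$ (and likewise for $\tilde{B},\tilde{D}$), using only the defining constraints of the two witness sets, spells out the step the paper simply asserts as following ``from the construction of this scheme''.
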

\begin{proof}
From the construction of this scheme and the fact that these sets  $\Omega_A(\tau), \Omega_B(\tau) $ contain only choices that are honestly consistent with the scheme's requirements and the transcript $\tau$, we see that for any pair of tuples
\[ (x_a,y_a,r_a,s_a, u) \in \Omega_A(\tau), \quad \quad (x_b,y_b,r_b,s_b, v)\in  \Omega_B(\tau)  \]
 we have $u(v(W))=v(u(W))$ (the operators in these tuples necessarily commute). 
	
 Given this, fix an operator $u_*$ for some tuple $(x_a,y_a,r_a,s_a, u_*) \in \Omega_A(\tau)$; then we have $u_*(W)= U$ for that operator (because this holds for all such operators), and can define a fixed value $K=u_*(V)$.  Then for any tuple in $\Omega_B(\tau) $ the operator $v$ in that tuple gives $v(W)=V$ (because this holds for all such operators) and $v$ commutes with the fixed operator $u_*$.    This means we have $v(U)=v(u_*(W))=u_*(v(W)) = u_*(V) = K$; and so for any tuple in $\Omega_B(\tau) $ the operator $v$ gives $v(U) = K$ for this fixed value $K$. 
 
  Since all pairs of operators  $u,v$ in these sets commute we have $u(V)=u(v(W))=v(u(W))=v(U)=K$ for all such pairs, and for any tuple in $\Omega_A(\tau) $ the operator $u$ also gives $u(V)=K$.
Thus all honest and transcript-consistent choices for Alice and Bob's private terms necessarily give the same shared key $K$ for that transcript: the shared key has a fixed value $K$ fully determined by the transcript $\tau$.
\end{proof}

\subsection{Parameter and subspace constructions}

This scheme gives Alice and Bob a shared key $K$ for all choices of prime  $p$ and matrix dimension $n$,   all choices for the public matrix $W$ and the public core  subspaces  $\mathcal{X},\mathcal{Y}$ with their centralizers $\Cen{\mathcal{X}},\Cen{\mathcal{Y}}$, and all choices for the transporter subspaces $\mathcal{R},\mathcal{S}$ and $\Cen{\mathcal{R}},\Cen{\mathcal{S}}$  where each transporter subspace contains invertible matrices.   For security we recommend large $n$ and some prime $p < n$, and core subspaces chosen so that $\mathcal{X},\Cen{\mathcal{X}},\mathcal{Y},\Cen{\mathcal{Y}}$ are all nontrivial and at least dimension $\alpha n$ for some constant $\alpha$.   We recommend transporter subspaces chosen so that $\mathcal{R}^\times,\Cen{\mathcal{R}}^\times,\mathcal{S}^\times,\Cen{\mathcal{S}}^\times$  are all nontrivial, do not commute with the core subspaces or with $W$, and so that all have dimensions of at least $\gamma n$ for some constant $\gamma$.   We also recommend that the core and transporter spaces $\mathcal{R},\mathcal{S},\mathcal{X},\mathcal{Y}$ be subject to conjugation by independent randomly chosen invertible matrices $\theta_R,\theta_S,\theta_X$ and $\theta_Y$, so any block or tensor structure is not shared across these spaces.

 A simple block construction giving transporter spaces of dimensions approximately $n^2/4$ (so exceeding these requirements) is to take $n=2k$, chose some invertible matrix $\theta_R$, and define
\[  \mathcal{R} =  \left\{  \theta_R \begin{bmatrix} r & 0 \\ 0 & z  \end{bmatrix} \theta_R^{-1} , \quad r \in \mathcal{M}_k, z \in \mathcal{Z}_k \right\}   \]
so that 
\[  \Cen{\mathcal{R}} =  \left\{  \theta_R \begin{bmatrix} z & 0 \\ 0 & r  \end{bmatrix} \theta_R^{-1} , \quad r \in \mathcal{M}_k, z \in \mathcal{Z}_k \right\}  \]
and both $\mathcal{R}$ and $ \Cen{\mathcal{R}}$ have dimensions $k^2+1 \approx n^2/4$ and contain invertible matrices;  a similar construction (with some conjugator $\theta_S$) gives $\mathcal{S}$ and $ \Cen{\mathcal{S}}$ with the same dimensions.    We don't require this transporter subspace construction: we simply give it to show an extreme example.

\section{Analysis} 
 
Just as in the various public subspace schemes described earlier, the shared key $K$ in this hidden subspace scheme is a function of the public  key-exchange transcript  only.   This  transcript, however, by design does not include the subspaces $\mathcal{A},\mathcal{B},\mathcal{C},\mathcal{D}$ over which  key exchange runs; and so our witness-finding cryptanalysis of public subspace key exchange in \cref{sec:cryptanalysis} does not apply directly to this scheme.  
  Instead, the natural path to key recovery here is one where an eavesdropper Eve identifies witness transporters that allow her to recover Alice or Bob's hidden subspaces; then can she get the shared key $K$ as in \cref{sec:cryptanalysis}.
The witness-finding form of this  approach matches Shpilrain's  attack on Stickel's original scheme, and our cryptanalysis of generalized Stickel-type schemes; as far as we know, the same witness-finding pattern also covers all successful attacks on tropical Stickel-type variants:  our heuristic security assumption is that key exchange in this scheme will be secure if it is secure against such a witness-finding attack.
Here we give an initial analysis of the complexity of this witness-finding approach, showing that witness-finding key recovery is NP-hard in the worst case (via reduction to a standard problem).   Note that we do not have any proof of hardness for the scheme in general: there may be approaches that avoid witness-finding entirely and recover the shared key some other way. 

To recover the shared key in this scheme via witness-finding, Eve solves for matrices $x,y,\rho,\sigma$ such that
\[ \rho^{-1} \tilde{C} \rho = x, \quad x \in \Cen{\mathcal{X}}, \rho \in \mathcal{R}^\times, \quad \sigma^{-1} \tilde{D} \sigma = y, \quad y\in \mathcal{Y},  \sigma \in \Cen{\mathcal{S}}^\times  \]
and if she finds an operator $\mu$ such that
\[   \mu(W)=U, \quad \mu \in \langle \mathcal{A}_e \,  \slot\, \mathcal{B}_e \rangle \textit{ where }   \mathcal{A}_e = \rho Alg(\tilde{A}) \rho^{-1},\, \mathcal{B}_e =  \sigma Alg(\tilde{B})\sigma^{-1}  \]
then she has a set of witnesses $(x,y,\rho,\sigma,\mu) \in \Omega_A(\tau)$ and gets the shared key by taking $\mu(V)=K$.    Given such transporters $\rho,\sigma$, solving for such an operator $\mu$ can be done via linear algebra over tensors as in \cref{sec:cryptanalysis}; and so the  computational challenge for Eve is to solve the following problem: \\

\fbox{%
	\parbox{0.9\linewidth }
	{ \textbf{Problem 1:} Restricted conjugacy into subspaces. \\
		
		Given a matrix $Q$, a subspace $\mathcal{T}$, and a subspace  $\Phi$  (all in $\mathcal{M}$) return a matrix $\rho$ such that 
		\[\rho \in \Phi^{\times}, \, \, \rho^{-1} Q \rho = t \textit{ for some } t \in \mathcal{T} \] 
		or report failure if no such $\rho$ exists.
		\\ 
}}

\vspace{12pt}

This is close to several well-studied orbit and matrix-space problems.  If the target matrix $t$  is fixed rather than ranging over the subspace $\mathcal{T}$,  Problem $1$  reduces to a  matrix-tuple conjugacy or transporter problem: given two fixed matrices $Q,t$, this asks for a transporter $\rho$ such that $\rho^{-1} Q \rho = t$, with recent work of Qiao and Sun providing polynomial-time canonical form solutions for matrix tuples over finite fields  \cite{qiao2024canonical}.   The extra features in our setting are that the target matrix $t$ is not fixed but is an unknown member of the subspace $\mathcal{T}$, and that the transporter $\rho$ is itself restricted to lie in a specified subspace $\Phi$.  With these features we can show that Problem $1$ has a reduction from the standard hardness problem  \\

\fbox{%
	\parbox{0.9\linewidth }
	{ \textbf{Problem 2:} Edmonds' problem (search version) \\
		
	  Given a subspace $\mathcal{B} \subseteq \mathcal{M}_{k}$, return a matrix  $b \in \mathcal{B}^\times$ (an invertible member of $\mathcal{B}$) or report failure if no such invertible member exists.
		\\ 
}} 
\vspace{12pt}

We show this with a block construction that uses the same block-scalar transporter shape as the illustrative transporter spaces described above. Let $\mathcal B\leq \mathcal M_k$ be an instance of Edmonds' problem. Set $n=2k$  and define
\[
\Phi=
\left\{
	\begin{bmatrix}
		\phi &0\\
		0&z
	\end{bmatrix}, 	
	\phi \in\mathcal M_k,\ z\in\mathcal Z_k
	\right\}.
\]
Choose a random invertible matrix $q\in\mathcal M_k^\times$ and random matrices $Q_{11},Q_{21},Q_{22} \in\mathcal M_k$ and set
\[
Q=
\begin{bmatrix}
	Q_{11} &q\\
	Q_{21} & Q_{22}
\end{bmatrix}
\]
and define the target subspace
\[
\mathcal T_{\mathcal B}=
\left\{
	\begin{bmatrix}
		T_{11}&b\\
		T_{21}&T_{22}
	\end{bmatrix}
	:
	T_{11},T_{21},T_{22}\in\mathcal M_k,\ b\in\mathcal B
	\right\}.\]
(a linear subspace of $\mathcal M_{n}$)

Now take
\[
\rho=
\begin{bmatrix}
	\phi &0\\
	0&z
\end{bmatrix}
\in\Phi^\times,
\qquad
\phi \in\mathcal M_k^\times,\ z\in\mathcal Z_k^\times.
\]
and we have
\[
\rho^{-1}Q\rho =
\begin{bmatrix}
	\phi ^{-1}&0\\
	0&z^{-1}
\end{bmatrix}
\begin{bmatrix}
	Q_{11} &q\\
	Q_{21} & Q_{22}
\end{bmatrix}
\begin{bmatrix}
	\phi &0\\
	0&z
\end{bmatrix}
=
\begin{bmatrix}
\phi ^{-1} Q_{11} \phi  &\phi ^{-1}qz\\
	z^{-1} Q_{11}  \phi   &  Q_{22}  
\end{bmatrix}.
\]
and $\rho Q \rho^{-1} \in \mathcal{T}_\mathcal{B}$ when $\phi ^{-1}qz \in \mathcal{B}$ (all other requirements for membership in $\mathcal{T}_\mathcal{B}$ being satisfied automatically).
Since $\phi ,q,z$ are invertible this means that any solution $\rho$ to Problem $1$ gives an invertible element
\[
b=\phi ^{-1}qz\in\mathcal B^\times,
\]
and hence  a solution to Problem $2$.

Conversely, suppose $\mathcal B$ contains an invertible element $b\in\mathcal B^\times$ (a solution to Problem $2$). Taking $z=I$ and
\[
\phi =qb^{-1}
\]
gives $\phi \in\mathcal M_k^\times$ and
\[
\phi ^{-1}qz=\phi ^{-1}q=b.
\]
Thus any solution $b$ to Problem $2$ gives a transporter $\rho$ such that
\[
\rho=
\begin{bmatrix}
	qb^{-1}&0\\
	0&I
\end{bmatrix}
\in\Phi^\times, \quad \quad 
\rho^{-1}Q\rho\in\mathcal T_{\mathcal B}.
\]
and hence  gives a solution to Problem $1$.   This means that every instance of Problem $2$  can be solved by obtaining a solution to a corresponding instance of  Problem $1$; and so Problem $1$ must be at least as hard as Problem $2$.  Problem $2$ is known to be NP-hard in the matrix dimension $k$ for fixed field $\mathbb{F}_p$ \cite{buss1999computational}, though it does have polynomial time randomised solutions for  $p > 2k =n$ \cite{lovasz1979determinants,ivanyos2015generalized}, and so solving Problem $1$ will be  NP-hard in the worst case, at least when $p < n$; and this witness-finding approach to key recovery inherits this worst-case hardness.
  
 \section{Discussion}
 
 We've given a general cryptanalysis showing that  Stickel-type key exchange schemes over public subspaces can be broken in polynomial time by linear algebra over tensor products of those subspaces, and given a new version of this type of scheme where the key-exchange subspaces are not public, blocking this attack.  We have also shown that to recover the shared key in these schemes via witness-finding (the standard approach for Stickel-type schemes) in the worst case requires an eavesdropper Eve to solve an NP-hard problem.  This does not, of course, show that finding the shared key in this scheme is unconditionally NP-hard: there may be approaches that avoid witness-finding entirely and recover the shared key in polynomial time some other way. 
 
 One possible drawback of the hidden subspaces approach described here is that it gives our eavesdropper Eve public subspaces containing all private core and transporter matrices used in the scheme (and so linear algebraic structure that can be used to express those private matrices).  In a second paper we give a more complex version of this hidden subspace approach, where transporters are formed via local rather than global commutation so that, for example, Alice and Bob's transporters $r_a$ and $r_b$ do not themselves commute but where
 \[  r_b r_a\, x\, r_a^{-1} r_b^{-1} =  r_a r_b\, x\, r_b^{-1} r_a^{-1}   \]
 holds for all $x \in \mathcal{X}$ (the transporters commute locally at $\mathcal{X}$).   Alice and Bob's transporters are not members of public subspaces $\mathcal{R},\mathcal{S}$ (or their centralizers) in this `local commuting transporters' scheme, but instead are solutions to public non-linear problems (versions of Problem $1$).  This means that no linear algebraic structure describing those transporters is  available to Eve, and getting the shared key by recovery of witness transporters $\rho_a,\sigma_a$ becomes a more complex problem.
  
\bibliographystyle{unsrt}
\bibliography{references}

@preamble{ " \newcommand{\noop}[1]{} " }

@inproceedings{qiao2024canonical,
	author    = {Qiao, Youming and Sun, Xiaorui},
	title     = {Canonical Forms for Matrix Tuples in Polynomial Time},
	booktitle = {Proceedings of the 65th IEEE Symposium on Foundations of Computer Science (FOCS)},
	year      = {2024},
	note      = {Available as arXiv:2409.12457}
}

@article{buss1999computational,
	title={The computational complexity of some problems of linear algebra},
	author={Buss, Jonathan F and Frandsen, Gudmund S and Shallit, Jeffrey O},
	journal={Journal of Computer and System Sciences},
	volume={58},
	number={3},
	pages={572--596},
	year={1999},
	publisher={Elsevier}
}

@inproceedings{lovasz1979determinants,
	title={On determinants, matchings, and random algorithms.},
	author={Lov{\'a}sz, L{\'a}szl{\'o}},
	booktitle={FCT},
	volume={79},
	pages={565--574},
	year={1979}
}

@article{ivanyos2015generalized,
	title={Generalized Wong sequences and their applications to Edmonds' problems},
	author={Ivanyos, G{\'a}bor and Karpinski, Marek and Qiao, Youming and Santha, Miklos},
	journal={Journal of Computer and System Sciences},
	volume={81},
	number={7},
	pages={1373--1386},
	year={2015},
	publisher={Elsevier}
}

@misc{sanchez2025keyexchangeprotocolbased,
	title={On key exchange protocol based on Two-side multiplication action}, 
	author={Alvaro Otero Sanchez},
	year={2025},
	eprint={2504.15880},
	archivePrefix={arXiv},
	primaryClass={cs.CR},
	url={https://arxiv.org/abs/2504.15880}, 
}

@article{shpilrain2005new,
	title={A new key exchange protocol based on the decomposition problem},
	author={Shpilrain, Vladimir and Ushakov, Alexander},
	journal={arXiv preprint math/0512140},
	year={2005}
}

@article{grigoriev2014tropical,
	title={Tropical cryptography},
	author={Grigoriev, Dima and Shpilrain, Vladimir},
	journal={Communications in Algebra},
	volume={42},
	number={6},
	pages={2624--2632},
	year={2014},
	publisher={Taylor \& Francis}
}

@article{isaac2021closer,
	title={A closer look at the tropical cryptography},
	author={Isaac, Steve and Kahrobaei, Delaram},
	journal={International Journal of Computer Mathematics: Computer Systems Theory},
	volume={6},
	number={2},
	pages={137--142},
	year={2021},
	publisher={Taylor \& Francis}
}

@article{ahmed2023review,
	title={A review of the tropical approach in cryptography},
	author={Ahmed, Kashif and Pal, S and Mohan, Radha},
	journal={Cryptologia},
	volume={47},
	number={1},
	pages={63--87},
	year={2023},
	publisher={Taylor \& Francis}
}

@misc{Nager2024,
	author = {Daniel Nager},
	title = {{S}tickel's Key Agreement Algebraic Variation},
	howpublished = {Cryptology {ePrint} Archive, Paper 2024/792},
	year = {2024},
	url = {https://eprint.iacr.org/2024/792}
}

@article{kotov2018analysis,
	title={Analysis of a key exchange protocol based on tropical matrix algebra},
	author={Kotov, Matvei and Ushakov, Alexander},
	journal={Journal of Mathematical Cryptology},
	volume={12},
	number={3},
	pages={137--141},
	year={2018}
}

@inproceedings{stickel2005new,
	title={A new method for exchanging secret keys},
	author={Stickel, Eberhard},
	booktitle={Third International Conference on Information Technology and Applications (ICITA'05)},
	volume={2},
	pages={426--430},
	year={2005},
	organization={IEEE}
}

@inproceedings{shpilrain2008cryptanalysis,
	title={Cryptanalysis of {S}tickel's key exchange scheme},
	author={Shpilrain, Vladimir},
	booktitle={International computer science symposium in Russia},
	pages={283--288},
	year={2008},
	organization={Springer}
}
 
\end{document}